\global\def\bigPage{
        \setlength{\topmargin}{-0.5in}
        \setlength{\textheight}{9in}
        \setlength{\oddsidemargin}{+0.0in}
        \setlength{\textwidth}{6.3in}
        }
\newif\ifarxiv
\newcommand{\putFrag}[4]{\begin{figure}[htbp]
                            \begin{center}
                            #4
                            \includegraphics[width=#3in]{figures/#1.eps}
                            \end{center}
			    \caption{#2}
			    \label{fig:#1}
                          \end{figure} }
\newcommand{\textb}[1]{\textcolor{black}{#1}}
\newcommand{\blue}{\color{black}}
\newcommand{\black}{\color{black}}
 \newcommand{\defn}{\triangleq}
 \renewcommand{\vec}[1]{\ensuremath{\boldsymbol{#1}}}
 \newcommand{\mc}[1]{\ensuremath{\mathcal{#1}}}
 \newcommand{\Real}{{\mathbb{R}}}
 \newcommand{\of}[1]{^{\scriptscriptstyle (#1)}}
 \newcommand{\ofp}[1]{^{{\scriptscriptstyle (#1)}\prime}}
 \newcommand{\ofpp}[1]{^{{\scriptscriptstyle (#1)}\prime\prime}}
 \newcommand{\tran}{^{\top}}
 \DeclareMathOperator{\sgn}{sgn}
 \DeclareMathOperator{\E}{\mathbb{E}}
 \DeclareMathOperator*{\argmin}{arg\,min}
 \newtheorem{lemma}{Lemma}
 \newtheorem{assumption}{Assumption}
 \newtheorem{definition}{Definition}
 \renewcommand{\eqref}[1]{(\ref{eq:#1})}
 \newcommand{\Figref}[1]{Figure~\ref{fig:#1}}
 \newcommand{\figref}[1]{Fig.~\ref{fig:#1}}
 \newcommand{\tabref}[1]{Table~\ref{tab:#1}}
 \newcommand{\secref}[1]{Section~\ref{sec:#1}}
 \newcommand{\lemref}[1]{Lemma~\ref{lem:#1}}
 \newcommand{\assref}[1]{Assumption~\ref{ass:#1}}
 \newcommand{\denoiser}{\eta}
 \newcommand{\mse}{\mathcal{E}}
 \newcommand{\normal}{\mathcal{N}}
 \newcommand{\Amat}{\boldsymbol{\vec{A}}}
\begin{document}
\setlength{\arraycolsep}{0.8mm}

 \title{A Simple Derivation of AMP and its State Evolution via First-Order Cancellation}
 \author{Philip Schniter\thanks{P.~Schniter (email: schniter.1@osu.edu) is with
           the Department of Electrical and Computer Engineering,
           The Ohio State University, 
           Columbus, OH, 43210.
           His work is supported in part by
           the National Science Foundation under grant
           CCF-1716388.}}
 \date{}
 \maketitle

\begin{abstract}
We consider the linear regression problem, where the goal is to recover the vector $\boldsymbol{x}\in\mathbb{R}^n$ from measurements $\boldsymbol{y}=\boldsymbol{A}\boldsymbol{x}+\boldsymbol{w}\in\mathbb{R}^m$ under known matrix $\boldsymbol{A}$ and unknown noise $\boldsymbol{w}$. 
For large i.i.d.\ sub-Gaussian $\boldsymbol{A}$, the approximate message passing (AMP) algorithm is precisely analyzable through a state-evolution (SE) formalism, which furthermore shows that AMP is Bayes optimal in certain regimes.
The rigorous SE proof, however, is long and complicated.
And, although the AMP algorithm can be derived as an approximation of loop belief propagation (LBP), this viewpoint provides little insight into why large i.i.d.\ $\boldsymbol{A}$ matrices are important for AMP, and why AMP has a state evolution.
In this work, we provide a heuristic derivation of AMP and its state evolution, based on the idea of ``first-order cancellation,'' that provides insights missing from the LBP derivation while being much shorter than the rigorous SE proof.
\end{abstract}


\section{Introduction} \label{sec:intro}

We consider the standard linear regression problem, where the goal is to recover the vector $\vec{x}\in\Real^n$ from measurements 
\begin{align}
\vec{y}=\Amat\vec{x}+\vec{w}\in\Real^m
\label{eq:y},
\end{align}
where $\Amat$ is a known matrix and $\vec{w}$ is an unknown disturbance.
With high-dimensional random $\Amat$, the approximate message passing (AMP) algorithm \cite{Donoho:PNAS:09} remains one of the most celebrated and best understood iterative algorithms.
In particular, when the entries of $\Amat$ are drawn i.i.d.\ from a sub-Gaussian distribution and $m,n\rightarrow\infty$ with $m/n\rightarrow \delta\in(0,\infty)$, ensemble behaviors of AMP, such as the per-iteration mean-squared error (MSE), can be perfectly predicted using a state evolution (SE) formalism \cite{Bayati:AAP:15}.\footnote{See also \cite{Bayati:TIT:11} for an earlier proof of AMP's state evolution under i.i.d.\ Gaussian entries.}
Furthermore, the SE formalism shows that, in certain regimes, AMP's MSE converges to the minimum MSE as predicted by the replica method \cite{Bayati:TIT:11,Bayati:AAP:15}, which has been shown to coincide with the minimum MSE for linear regression under i.i.d.\ Gaussian $\Amat$ \cite{Reeves:ISIT:16,Barbier:ALL:16} as $m,n\rightarrow\infty$ with $m/n\rightarrow \delta\in(0,\infty)$.
More recently, it has been proven that the state-evolution accurately characterizes AMP's behavior for large but finite $m,n$ \cite{Rush:TIT:18}.

The rigorous SE proofs in \cite{Bayati:AAP:15,Bayati:TIT:11,Rush:TIT:18}, however, are long and complicated, and thus remain out of reach for many readers.
And, although the AMP algorithm can be heuristically derived from an approximation of loop belief propagation (LBP) \textb{\cite{Yedidia:TIT:05,Wainwright:FTML:08}}, as outlined in \cite{Donoho:ITW:10a}, \textb{\cite[App.A]{Bayati:TIT:11},} and \cite{Montanari:Chap:12}, \textb{and expectation propagation (EP) \cite{Minka:Diss:01,Heskes:JSM:05}, as outlined in \cite{Meng:SPL:15}},
the LBP\textb{/EP} perspective is lacking in several respects.
First, LBP \textb{and EP are heuristic, making it surprising that further approximations of these approaches can be optimal.}
Second, the LBP \textb{and EP derivations provide} little insight into why large i.i.d.\ $\Amat$ matrices are important for AMP.
\textb{For the LBP/EP derivations, it suffices that the entries of $\vec{A}$ have ``roughly the same magnitude $O(1/\sqrt{m})$'' \cite{Montanari:Chap:12,Meng:SPL:15}, suggesting that structured matrices (e.g., DCT, Hadamard, Fourier) should work as well as i.i.d.\ random ones. 
But, in practice, AMP often diverges with such structured matrices, and the reasons why are not evident from the LBP/EP viewpoint.}
Third, the LBP \textb{and EP derivations do} not \textb{explain why AMP obeys} a scalar state evolution \textb{when $\vec{A}$ is large and i.i.d.\ sub-Gaussian, nor do they describe how the estimation error variance can be predicted at each iteration}.

In this work, we propose a heuristic derivation of AMP and its MSE state evolution that uses the simple idea of ``first-order cancellation.'' 
This derivation provides insights missing from the LBP \text{and EP} derivations, while being much more accessible than the rigorous SE proofs.


\section{Problem Setup} \label{sec:setup}

In our treatment of the linear regression problem \eqref{y},
$\vec{y}=[y_1,\dots,y_m]\tran$, $\vec{x}=[x_1,\dots,x_n]\tran$, and $\vec{w}=[w_1,\dots,w_m]\tran$ are deterministic vectors and $\Amat\in\Real^{m\times n}$ is a deterministic matrix.
Importantly, however, we assume that the components $\{a_{ij}\}$ of $\Amat$ are realizations of 
i.i.d.\ Bernoulli\footnote{With additional work, our derivation can be extended to i.i.d.\ Gaussian $A_{ij}$, but doing so lengthens the derivation and provides little additional insight.} 
random variables $A_{ij}\in\pm\frac{1}{\sqrt{m}}$ that are drawn independently of $\vec{x}$ and $\vec{w}$.
Our model for $\Amat$ is a special case of that considered in \cite{Bayati:AAP:15}.

Throughout, we will focus on the following large-system limit.
\begin{definition}
The ``large system limit'' is defined as $m,n\rightarrow\infty$ with $m/n\rightarrow\delta$ for some fixed sampling ratio $\delta\in(0,\infty)$.
\end{definition}
\noindent
We will assume that the components of $\vec{x}$, $\vec{w}$, and $\vec{y}$ scale as $O(1)$ in the large-system limit.

We consider a family of algorithms that, starting with $\vec{x}\of{0}=\vec{0}$, iterates the following over iteration index $t=0,1,2,\dots$:
\begin{subequations} \label{eq:alg}
\begin{align}
\vec{v}\of{t} 
&= \vec{y} - \Amat\vec{x}\of{t} + \vec{\mu}\of{t}\\
\vec{x}\of{t+1} 
&= \denoiser\of{t}( \underbrace{\vec{x}\of{t}+\Amat\tran\vec{v}\of{t}}_{\displaystyle \defn \vec{r}\of{t}} ) ,
\end{align}
\end{subequations}
where $\vec{\mu}\of{t}$ is a correction term and
$\denoiser\of{t}(\cdot)$ is a component-wise \textb{separable} function.
\textb{That is, $[\denoiser\of{t}(\vec{r})]_j=\denoiser\of{t}(r_j)~\forall j$, where, with some abuse of notation, we use the same notation for the function $\eta\of{t}:\Real^n\rightarrow\Real^n$ and its component maps $\eta\of{t}:\Real\rightarrow\Real$.}
The quantity $\vec{x}\of{t}$ is iteration-$t$ estimate of the unknown vector $\vec{x}$.
We refer to $\denoiser\of{t}(\cdot)$ as a ``denoiser'' for reasons that will become clear in the sequel.
For technical reasons, we will assume that $\denoiser\of{t}(\cdot)$ is a polynomial function of bounded degree, similar to the assumption in \cite{Bayati:AAP:15}.

The classical iterative shrinkage/thresholding (IST) algorithm \cite{Chambolle:TIP:98} uses  no correction, i.e.,
\begin{align}
\vec{\mu}\of{t}=\vec{0}
\label{eq:ist} ,
\end{align}
for all iterations $t$,
whereas the AMP algorithm \cite{Donoho:PNAS:09} uses the ``Onsager'' correction 
\begin{align}
\vec{\mu}\of{t}
&= \frac{1}{m} \vec{v}\of{t-1} \sum_{j=1}^n \denoiser\ofp{t-1}(r_j\of{t-1})
\label{eq:onsager},
\end{align}
initialized with $\vec{\mu}\of{0}=\vec{0}$.
In \eqref{onsager}, $\denoiser\ofp{t}$ refers to the derivative of $\denoiser\of{t}$.
Our goal is to analyze the effect of $\vec{\mu}\of{t}$ on the behavior of algorithm \eqref{alg} in the large-system limit, and in particular to understand how and why the Onsager correction \eqref{onsager} is a good choice.
To do this, we will analyze the errors on $\vec{r}\of{t}$ and $\vec{x}\of{t}$ in \eqref{alg} and drop terms that vanish in the large-system limit.

\textb{It has been shown \cite{Bayati:TIT:11,Montanari:Chap:12} that IST has an predictable and desirable behavior in the case that $\vec{A}$ is a large i.i.d.\ Gaussian matrix that is \emph{re-drawn} at each iteration $t$ (with a corresponding update of $\vec{y}$).  
But this desirable behavior vanishes in the practical case that $\vec{A}$ is fixed over the iterations. 
In some sense, the goal of the Onsager correction \eqref{onsager} is to restore this desirable behavior when $\vec{A}$ is \emph{fixed} over the iterations.}


\section{AMP Derivation} \label{sec:deriv}

We will now analyze the error $\vec{e}\of{t}$ on the input to the denoiser $\vec{r}\of{t}$, i.e.,
\begin{align}
\vec{e}\of{t} 
\defn \vec{r}\of{t} - \vec{x} 
\label{eq:et0}.
\end{align}
From \eqref{alg} and \eqref{et0} we have that
\begin{align}
\vec{e}\of{t} 
&= \vec{x}\of{t} + \Amat\tran(\vec{y}-\Amat\vec{x}\of{t} + \vec{\mu}\of{t}) - \vec{x} \\
&= (\vec{I}-\Amat\tran\Amat)\vec{x}\of{t} + \Amat\tran(\Amat\vec{x}+\vec{w}+\vec{\mu}\of{t}) - \vec{x} \\
&= (\vec{I}-\Amat\tran\Amat)\vec{x}\of{t} - (\vec{I}-\Amat\tran\Amat)\vec{x} + \Amat\tran(\vec{w}+\vec{\mu}\of{t})  
\label{eq:et}.
\end{align}
Let us examine the $j$th component of $\vec{e}\of{t}$ when $t\geq 1$.
We have that
\ifarxiv{
\begin{align}
[ (\vec{I}-\Amat\tran\Amat)\vec{x}\of{t} ]_j
&= x\of{t}_j - \sum_i a_{ij} \sum_{l} a_{il} x_l\of{t} \\
&= \Big(1-\sum_{i=1}^m a_{ij}^2\Big)x\of{t}_j - \sum_i a_{ij} \sum_{l\neq j} a_{il} x_l\of{t}\\
&= - \sum_i a_{ij} \sum_{l\neq j} a_{il} x_l\of{t}
\label{eq:IAAxt0}
\end{align}
}\else{
\begin{align}
\lefteqn{ [ (\vec{I}-\Amat\tran\Amat)\vec{x}\of{t} ]_j }\nonumber\\
&= x\of{t}_j - \sum_i a_{ij} \sum_{l} a_{il} x_l\of{t} \\
&= \Big(1-\sum_{i=1}^m a_{ij}^2\Big)x\of{t}_j - \sum_i a_{ij} \sum_{l\neq j} a_{il} x_l\of{t}\\
&= - \sum_i a_{ij} \sum_{l\neq j} a_{il} x_l\of{t}
\label{eq:IAAxt0}
\end{align}
}\fi
since $a_{ij}^2=1/m~\forall ij$.
Continuing,
\ifarxiv{
\begin{align}
[ (\vec{I}-\Amat\tran\Amat)\vec{x}\of{t} ]_j
&= -\sum_i a_{ij} \sum_{l\neq j} a_{il} \denoiser\of{t-1}(r_l\of{t-1}) \\
&= -\sum_i a_{ij} \sum_{l\neq j} a_{il} \denoiser\of{t-1}\Big(
\underbrace{ x_l\of{t-1}+\sum_{k\neq i}a_{kl}v_k\of{t-1} 
        }_{\displaystyle \defn r_{il}\of{t-1}}
+a_{il}v_i\of{t-1}\Big) 
\label{eq:IAAxtj},
\end{align}
\else{
\begin{align}
\lefteqn{ [ (\vec{I}-\Amat\tran\Amat)\vec{x}\of{t} ]_j }\nonumber\\
&= -\sum_i a_{ij} \sum_{l\neq j} a_{il} \denoiser\of{t-1}(r_l\of{t-1}) \\
&= -\sum_i a_{ij} \sum_{l\neq j} a_{il} \denoiser\of{t-1}\Big(
\underbrace{ x_l\of{t-1}+\sum_{k\neq i}a_{kl}v_k\of{t-1} 
        }_{\displaystyle \defn r_{il}\of{t-1}}
+a_{il}v_i\of{t-1}\Big) 
\label{eq:IAAxtj},
\end{align}
}\fi
where $r_{il}\of{t-1}$ omits the direct contribution of $a_{il}$ from $r_l\of{t-1}$ and thus is only weakly dependent on $\{a_{ij}\}_{j=1}^n$.
We formalize this weak dependence through \assref{indep}, which is admittedly an approximation. 
In fact, the approximate nature of \assref{indep} is the main reason that our derivation is heuristic.
\begin{assumption}\label{ass:indep}
The matrix entry $a_{ij}$ is a realization of an equiprobable Bernoulli random variable $A_{ij}\in\pm\frac{1}{\sqrt{m}}$, where $\{A_{ij}\}$ are mutually independent and, $A_{ij}$ is independent of $\{r_{il}\of{t-1}\}_{l=1}^n$, $\{x_l\}_{l=1}^n$, and $\{w_k\}_{k=1}^m$.
\end{assumption}

\textb{
We now say a few words about \assref{indep}. 
The assumption that $\{A_{ij}\}$ are i.i.d.\ and independent of $\vec{x}$ and $\vec{w}$ is rather common in the compressive sensing literature.
For example, these assumptions are used in the rigorous AMP analyses \cite{Bayati:AAP:15,Bayati:TIT:11,Rush:TIT:18}.
The assumption of Bernoulli $A_{ij}$ is a bit stronger, but it is not critical, in that our analysis could be extended to handle other sub-Gaussian distributions on $A_{ij}$ with additional steps.
Doing so, however, would complicate the derivation without providing much additional insight, and so we have elected not to take this path.
The assumption that $A_{ij}$ is independent of $\{r_{il}\of{t-1}\}_{l=1}^n$ is far stronger.
In reality, there is a weak dependence between these quantities, but properly accounting for it seems to require completely different analysis methods, such as those in \cite{Bayati:AAP:15,Bayati:TIT:11,Rush:TIT:18}.
}

\assref{indep} will be used often when analyzing summations, as in the following lemma.
\begin{lemma} 
\label{lem:scaling}
Consider the quantity $z_i=\sum_{j=1}^n a_{ij} u_j$, where $a_{ij}$ are realizations of i.i.d.\ random variables $A_{ij}$ with zero mean and $\E[A_{ij}^2]=1/m$.
If $\{A_{ij}\}$ are drawn independently of $\{u_j\}$, and $\{u_j\}$ scale as $O(1)$ in the large-system limit, then $z_i$ also scales as $O(1)$.
\end{lemma}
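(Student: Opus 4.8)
The plan is to condition on the values $\{u_j\}$ and view $z_i=\sum_{j=1}^n a_{ij}u_j$ as a sum of $n$ independent, zero-mean random variables $A_{ij}u_j$ (the independence across $j$ and the independence from $\{u_j\}$ being exactly what the hypothesis of the lemma, in the spirit of \assref{indep}, supplies). The conditional mean is then $\E[z_i\mid\{u_j\}]=\sum_j\E[A_{ij}]\,u_j=0$, and, since the $A_{ij}$ are uncorrelated across $j$,
\[
\var\!\big(z_i \,\big|\, \{u_j\}\big)
= \sum_{j=1}^n \E[A_{ij}^2]\,u_j^2
= \frac{1}{m}\sum_{j=1}^n u_j^2
= \frac{n}{m}\cdot\frac{1}{n}\sum_{j=1}^n u_j^2 .
\]

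Next I would note that in the large-system limit $n/m\to 1/\delta<\infty$, while the empirical second moment $\frac1n\sum_j u_j^2$ is $O(1)$ because each $|u_j|=O(1)$ by assumption; hence the conditional variance of $z_i$ is $O(1)$. Combining the zero mean with this $O(1)$ variance through Chebyshev's inequality, $\Pr\!\big(|z_i|>K \,\big|\, \{u_j\}\big)\le \var(z_i\mid\{u_j\})/K^2$, which is uniformly small for large $K$; this is the precise sense in which $z_i=O(1)$.

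The only real subtlety here is not difficulty but bookkeeping: one must fix what ``scales as $O(1)$'' means, and the cleanest reading is ``has zero conditional mean and $O(1)$ conditional variance,'' equivalently ``is bounded in probability uniformly along the large-system limit.'' With that reading the computation above is the entire proof. I would additionally remark, without including it in the proof, that the summands are uniformly negligible, $|A_{ij}u_j|=O(1/\sqrt m)$, so a Lindeberg-type CLT in fact shows $z_i$ is asymptotically $\normal\!\big(0,\tfrac1m\sum_j u_j^2\big)$; but this finer statement is not needed for the scaling claim.
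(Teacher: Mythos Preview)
Your proposal is correct and follows essentially the same route as the paper: compute the (conditional) second moment of $Z_i=\sum_j A_{ij}u_j$, use independence and $\E[A_{ij}A_{il}]=\tfrac{1}{m}\mathbf{1}_{\{j=l\}}$ to get $\E[Z_i^2]=\tfrac{n}{m}\cdot\tfrac{1}{n}\sum_j u_j^2=O(1)$, and conclude that $z_i$ is $O(1)$. The only differences are cosmetic---you make the ``$O(1)$'' interpretation explicit via Chebyshev and add a CLT remark, whereas the paper simply asserts the conclusion from the $O(1)$ second moment.
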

\begin{proof}
First, note that $z_i$ is a realization of the random variable $Z_i\defn \sum_{j=1}^n A_{ij} u_j$.
Furthermore,
$\E[Z_i^2]
 = \E[(\sum_{j=1}^n A_{ij} u_j)^2]
 = \sum_{j=1}^n \sum_{l=1}^n \E[A_{ij}A_{il}] u_j u_l
 = \frac{1}{m} \sum_{j=1}^n u_j^2
 = \textb{\frac{n}{m}} \frac{1}{n}\sum_{j=1}^n u_j^2$,
since $\E[A_{ij}A_{il}] = 1/m$ if $j=l$ and 
$\E[A_{ij}A_{il}] =\E[A_{ij}]\E[A_{il}]= 0$ if $j\neq l$. 
Clearly $m/n$ and $\frac{1}{n}\sum_{j=1}^n u_j^2$ are both $O(1)$ 
in the large-system limit.
Thus we conclude that $\E[Z_i^2]$ is $O(1)$.
Finally, since $z_i$ is a realization of a random variable $Z_i$ whose second moment is $O(1)$, we conclude that $z_i$ scales as $O(1)$ in the large-system limit.
\end{proof}

\blue
We now emphasize the importance of the independence properties on $\{A_{ij}\}$ used in \assref{indep} and \lemref{scaling}.
For example, consider the case where $A_{ij}=\sgn(u_j)/\sqrt{m}$, so that $\E[A_{ij}^2]=1/m$ (as in \lemref{scaling}) but where $A_{ij}$ depends on $\{u_j\}$.
In this case, $z_i=\sum_{j=1}^n a_{ij} u_j = \frac{1}{\sqrt{m}}\sum_{j=1}^n |u_j|$, which scales as $O(\sqrt{m})$ in the large-system limit, and thus is fundamentally different from the $O(1)$ scaling observed in \lemref{scaling}.

The distinction between 
$\{a_{ij}\}$ being independent of other quantities,
versus $\{a_{ij}\}$ simply being the right size, is one of the major
differences between our derivation 
and other simple derivations based on LBP and EP.
For example, both \cite{Montanari:Chap:12} and \cite{Meng:SPL:15} require only that ``$\{a_{ij}\}$ are roughly the same magnitude $O(1/\sqrt{m})$,'' 
which suggests that properly normalized structured matrices (e.g, Hadamard, Fourier, Discrete Cosine Transform, Radon Transform) should work well with AMP.
But when used to recover natural images/signals with these matrices, AMP often diverges.
And the LBP/EP viewpoint does not explain why.
\black

In the sequel, we will make use of the following lemma, whose proof is postponed because it is a bit long and does not provide much insight.
\begin{lemma}
\label{lem:scaling2}
Under \assref{indep} and the Onsager choice of $\vec{\mu}\of{t}$ from \eqref{onsager}, the elements of $\vec{v}\of{t}$, $\vec{r}\of{t}$, $\vec{x}\of{t}$, and $\vec{\mu}\of{t}$ scale as $O(1)$ in the large-system limit for all iterations $t$.
\end{lemma}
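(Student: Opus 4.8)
The plan is to proceed by induction on the iteration index $t$, carrying as the inductive hypothesis that the components of $\vec{v}\of{t'}$, $\vec{r}\of{t'}$, $\vec{x}\of{t'}$, and $\vec{\mu}\of{t'}$ all scale as $O(1)$ for every $t'\le t$. The base case $t=0$ is immediate: $\vec{x}\of{0}=\vec{0}$ and $\vec{\mu}\of{0}=\vec{0}$, so $\vec{v}\of{0}=\vec{y}$ is $O(1)$ by assumption, and the $j$th component of $\vec{r}\of{0}=\Amat\tran\vec{y}$ is $\sum_i a_{ij}y_i$, which is $O(1)$ by \lemref{scaling} after splitting off the lone $a_{ij}$-dependent piece of $y_i$ (writing $y_i=y_{ij}+a_{ij}x_j$ with $y_{ij}$ independent of $a_{ij}$, so that $\sum_i a_{ij}y_i=\sum_i a_{ij}y_{ij}+x_j$).

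For the inductive step one establishes the $t+1$ quantities in the order $\vec{x}\of{t+1}$, $\vec{\mu}\of{t+1}$, $\vec{v}\of{t+1}$, $\vec{r}\of{t+1}$. The first two are routine: $\vec{x}\of{t+1}=\denoiser\of{t}(\vec{r}\of{t})$ is a bounded-degree polynomial applied componentwise to the $O(1)$ vector $\vec{r}\of{t}$, hence $O(1)$; and each component of $\vec{\mu}\of{t+1}=\tfrac{1}{m}\vec{v}\of{t}\sum_{j}\denoiser\ofp{t}(r_j\of{t})$ is the $O(1)$ quantity $v_i\of{t}$ times $\tfrac1m\sum_j\denoiser\ofp{t}(r_j\of{t})$, which is an average of $n$ terms (with $n/m\to\delta$) each $O(1)$, hence $O(1)$.

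The substance lies in controlling $\Amat\vec{x}\of{t+1}$ and $\Amat\tran\vec{v}\of{t+1}$, whose summands involve entries of $\Amat$ that also appear, through the recursion, inside the vector being multiplied. For $[\Amat\vec{x}\of{t+1}]_i=\sum_j a_{ij}\denoiser\of{t}(r_j\of{t})$ the plan is to use the leave-one-out splitting $r_j\of{t}=r_{ij}\of{t}+a_{ij}v_i\of{t}$ as in \eqref{IAAxtj} and to Taylor-expand $\denoiser\of{t}(r_{ij}\of{t}+a_{ij}v_i\of{t})$ about $r_{ij}\of{t}$; this expansion is a finite sum because $\denoiser\of{t}$ is a polynomial of bounded degree. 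The zeroth-order term contributes $\sum_j a_{ij}\denoiser\of{t}(r_{ij}\of{t})$, which is $O(1)$ by \lemref{scaling} once the independence of $a_{ij}$ from $\{r_{il}\of{t}\}_l$ is invoked (\assref{indep}); every other term carries at least a factor $a_{ij}^2=1/m$ (possibly times one residual $a_{ij}$), so it is a sum of $n$ quantities of magnitude $O(1/m)$ or $O(m^{-3/2})$ and is therefore $O(1)$, and in fact $o(1)$ beyond the first such term. The term $\Amat\tran\vec{v}\of{t+1}$ is handled the same way once one observes, via the chain rule through the bounded-degree denoisers of iterations $\le t$, that $v_i\of{t+1}$ is (for fixed $t$) a bounded-degree polynomial in $a_{ij}$ with $O(1)$ coefficients whose $a_{ij}$-free part is a leave-one-out surrogate $v_{ij}\of{t+1}$ independent of $a_{ij}$; splitting $\sum_i a_{ij}v_i\of{t+1}$ accordingly gives an $O(1)$ leading term (\lemref{scaling} plus \assref{indep}) and $O(1)$, indeed $o(1)$, remainders by the same counting. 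Then $\vec{v}\of{t+1}=\vec{y}-\Amat\vec{x}\of{t+1}+\vec{\mu}\of{t+1}$ and $\vec{r}\of{t+1}=\vec{x}\of{t+1}+\Amat\tran\vec{v}\of{t+1}$ both have $O(1)$ components, closing the induction.

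The hardest part --- and the reason this proof is sketched here and carried out in full later --- is the bookkeeping behind the leave-one-out surrogates. Because $\vec{v}\of{t}$ and $\vec{r}\of{t}$ depend weakly on essentially every entry of $\Amat$, ``removing the $a_{ij}$-dependence'' is not a single algebraic substitution but a recursion-wide construction, and one must (i) set up the surrogate recursion so that \assref{indep} literally applies to it, (ii) check that the surrogate quantities are themselves $O(1)$ by re-running the induction on the modified recursion, and (iii) verify that the discrepancies $v_i\of{t+1}-v_{ij}\of{t+1}$ and $r_j\of{t}-r_{ij}\of{t}$ really are $O(1/\sqrt m)$ with $O(1)$ Taylor coefficients. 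Each of these steps is routine, but together they make the argument lengthy, which is why the details are deferred.
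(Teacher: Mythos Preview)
Your induction skeleton and your treatment of $\vec{x}\of{t+1}$, $\vec{\mu}\of{t+1}$, and $[\Amat\vec{x}\of{t+1}]_i$ match the paper's closely. The real divergence is in how you control $\vec{r}\of{t+1}$, i.e., the sum $\sum_i a_{ij} v_i\of{t+1}$. You propose to build a second leave-one-out surrogate $v_{ij}\of{t+1}$ and expand $v_i\of{t+1}$ as a bounded-degree polynomial in $a_{ij}$; notably, this never invokes the Onsager form of $\vec{\mu}\of{t}$ at all, and would, if carried through, establish the $O(1)$ scaling for \emph{any} $O(1)$ correction term. The paper instead exploits the Onsager hypothesis directly: it rewrites $\vec{r}\of{t}=(\vec{I}-\Amat\tran\Amat)\vec{x}\of{t}+\Amat\tran\vec{\mu}\of{t}+\vec{r}\of{0}$ and Taylor-expands $(\vec{I}-\Amat\tran\Amat)\vec{x}\of{t}$ using only the already-defined $r_{il}\of{t-1}$. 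That expansion produces one term, $\frac{1}{m}\sum_i a_{ij}\,v_i\of{t-1}\sum_{l\neq j}\denoiser\ofp{t-1}(r_{il}\of{t-1})$, that cannot be bounded via \lemref{scaling} under \assref{indep} as stated (because $v_i\of{t-1}$ is strongly coupled to $a_{ij}$), and the paper shows that $\Amat\tran\vec{\mu}\of{t}$ with the Onsager $\vec{\mu}\of{t}$ cancels it down to $O(1/\sqrt{m})$. After this cancellation every remaining sum is handled by \lemref{scaling} with the existing $r_{il}\of{t-1}$, and no surrogate for $\vec{v}$ is ever constructed.

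So your route is more general but considerably heavier: you must define a full surrogate recursion, extend \assref{indep} to cover $v_{ij}\of{t+1}$ (the stated assumption only speaks of $r_{il}\of{t-1}$), and carry an augmented induction showing the polynomial coefficients in $a_{ij}$ stay $O(1)$---exactly the bookkeeping you flag as the hardest part. The paper's route buys a much shorter, self-contained argument by using the specific Onsager hypothesis the lemma actually assumes, and it also rehearses the very ``first-order cancellation'' mechanism that drives the main derivation in \secref{deriv}.
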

\begin{proof}
See the appendix.
\end{proof}

We now perform a Taylor series expansion of the $\denoiser\of{t-1}$ term in \eqref{IAAxtj} about $r_{il}\of{t-1}$:
\ifarxiv{
\begin{align}
\lefteqn{ 
\denoiser\of{t-1}(r_{il}\of{t-1} + a_{il}v_i\of{t-1}) }\nonumber\\
&= \denoiser\of{t-1}(r_{il}\of{t-1})
  + a_{il}v_i\of{t-1} \denoiser\ofp{t-1}(r_{il}\of{t-1}) 
  + \underbrace{ \frac{1}{2}a_{il}^2 (v_i\of{t-1})^2 \denoiser\ofpp{t-1}(r_{il}\of{t-1}) 
                 + \text{H.O.T.} }_{\displaystyle O(1/m)} 
\label{eq:Taylor1} ,
\end{align}
}\else{
\begin{align}
\lefteqn{ 
\denoiser\of{t-1}(r_{il}\of{t-1} + a_{il}v_i\of{t-1}) }\nonumber\\
&= \denoiser\of{t-1}(r_{il}\of{t-1})
  + a_{il}v_i\of{t-1} \denoiser\ofp{t-1}(r_{il}\of{t-1}) 
  \nonumber\\&\quad
  + \underbrace{ \frac{1}{2}a_{il}^2 (v_i\of{t-1})^2 \denoiser\ofpp{t-1}(r_{il}\of{t-1}) 
                 + \text{H.O.T.} }_{\displaystyle O(1/m)} 
\label{eq:Taylor1} ,
\end{align}
}\fi
where the $O(1/m)$ scaling follows from the fact 
that $a_{il}^2=1/m~\forall il$, 
that both $v_i\of{t-1}$ and $r_{il}\of{t-1}$ scale as $O(1)$ via \lemref{scaling},
and $\denoiser\of{t-1}(\cdot)$ is polynomial of bounded degree,
which implies that $\denoiser\ofpp{t-1}(r_{il}\of{t-1})$ also scales as $O(1)$.
\textb{For similar reasons, the second term in \eqref{Taylor1} scales as $O(1/\sqrt{m})$.}
We will ignore the $O(1/m)$ term in \eqref{Taylor1} since it vanishes relative to the \textb{$O(1/\sqrt{m})$} component in the large-system limit.
Thus we have
\ifarxiv{
\begin{align}
[ (\vec{I}-\Amat\tran\Amat)\vec{x}\of{t} ]_j
&\approx -\sum_i a_{ij} \sum_{l\neq j} a_{il} 
  \big[ \denoiser\of{t-1}(r_{il}\of{t-1}) + a_{il}v_i\of{t-1}\denoiser\ofp{t-1}(r_{il}\of{t-1}) 
        \big] \\
&= -\sum_i a_{ij} \sum_{l\neq j} a_{il} \denoiser\of{t-1}(r_{il}\of{t-1}) 
        - \frac{1}{m}\sum_i a_{ij} v_i\of{t-1} \sum_{l\neq j} \denoiser\ofp{t-1}(r_{il}\of{t-1}) 
\label{eq:IAAxt}
\end{align}
}\else{
\begin{align}
\lefteqn{ [ (\vec{I}-\Amat\tran\Amat)\vec{x}\of{t} ]_j }\nonumber\\
&\approx -\sum_i a_{ij} \sum_{l\neq j} a_{il} 
  \big[ \denoiser\of{t-1}(r_{il}\of{t-1}) + a_{il}v_i\of{t-1}\denoiser\ofp{t-1}(r_{il}\of{t-1}) 
        \big] \\
&= -\sum_i a_{ij} \sum_{l\neq j} a_{il} \denoiser\of{t-1}(r_{il}\of{t-1}) 
        \nonumber\\&\quad
        - \frac{1}{m}\sum_i a_{ij} v_i\of{t-1} \sum_{l\neq j} \denoiser\ofp{t-1}(r_{il}\of{t-1}) 
\label{eq:IAAxt}
\end{align}
}\fi
using $a_{il}^2=1/m~\forall il$.  

Similar to \textb{\eqref{IAAxt0}}, we have
\begin{align}
[ (\vec{I}-\Amat\tran\Amat)\vec{x} ]_j
&= - \sum_i a_{ij} \sum_{l\neq j} a_{il} x_l ,
\end{align}
which, combined with \eqref{et} and \eqref{IAAxt}, yields 
\ifarxiv{
\begin{align}
e_j\of{t}
&\approx \sum_i a_{ij} \sum_{l\neq j} a_{il} \big[ x_l - \denoiser\of{t-1}(r_{il}\of{t-1}) \big] 
   - \frac{1}{m} \sum_i a_{ij} v_i\of{t-1} \sum_{l\neq j} \denoiser\ofp{t-1}(r_{il}\of{t-1}) 
   + \sum_i a_{ij} (w_i+\mu_i\of{t}) \\
&= \sum_i a_{ij} \sum_{l\neq j} a_{il} \big[ x_l - \denoiser\of{t-1}(r_{il}\of{t-1}) \big] 
  + \sum_i a_{ij} w_i 
  + \sum_i a_{ij} \Big[ \mu_i\of{t}
        - v_i\of{t-1} \frac{1}{m}\sum_{l\neq j} \denoiser\ofp{t-1}(r_{il}\of{t-1}) \Big] .
\label{eq:ejt}
\end{align}
}\else{
\begin{align}
&e_j\of{t}
\approx \sum_i a_{ij} \sum_{l\neq j} a_{il} \big[ x_l - \denoiser\of{t-1}(r_{il}\of{t-1}) \big] 
  \nonumber\\&\quad
  - \frac{1}{m} \sum_i a_{ij} v_i\of{t-1} \sum_{l\neq j} \denoiser\ofp{t-1}(r_{il}\of{t-1}) 
  + \sum_i a_{ij} (w_i+\mu_i\of{t}) \\
&= \sum_i a_{ij} \sum_{l\neq j} a_{il} \big[ x_l - \denoiser\of{t-1}(r_{il}\of{t-1}) \big] 
  \nonumber\\&\quad
  + \sum_i a_{ij} w_i 
  + \sum_i a_{ij} \Big[ \mu_i\of{t}
        - v_i\of{t-1} \frac{1}{m}\sum_{l\neq j} \denoiser\ofp{t-1}(r_{il}\of{t-1}) \Big] .
\label{eq:ejt}
\end{align}
}\fi

We are now in a position to observe the principal mechanism of AMP.
As we argue below (using the central limit theorem), the first and second terms in \eqref{ejt} behave like realizations of zero-mean Gaussians in the large-system limit, because
$\{a_{il}\}$ are realizations of i.i.d.\ zero-mean random variables $\{A_{il}\}$
that are independent of $x_l$, $w_i$, and $\{r_{il}\of{t-1}\}$ under \assref{indep}.
But the same cannot be said in general about the third term in \eqref{ejt}, because $v_i\of{t-1}$ is strongly coupled to $a_{ij}$.
Consequently, the denoiser input-error $e_j\of{t}$ is difficult to characterize for general $\mu_i\of{t}$.

With AMP's choice of $\mu_i\of{t}$, however, the 3rd term in \eqref{ejt} vanishes in the large-system limit.
In particular, with the Onsager choice \eqref{onsager}, the 3rd term in \eqref{ejt} takes the form
\ifarxiv{
\begin{align}
\lefteqn{
\sum_i a_{ij} \Big[ \frac{1}{m} v_i\of{t-1}\sum_l \denoiser\ofp{t-1}(r_l\of{t-1})
-\frac{1}{m} v_i\of{t-1}\sum_{l\neq j} \denoiser\ofp{t-1}(r_{il}\of{t-1}) \Big]
}\nonumber\\
&= \frac{1}{m}\sum_i a_{ij} v_i\of{t-1} \Big[ \denoiser\ofp{t-1}(r_j\of{t-1}) +\sum_{l\neq j} \Big(\denoiser\ofp{t-1}(r_l\of{t-1})-\denoiser\ofp{t-1}(r_{il}\of{t-1}) \Big)\Big] \\
&\approx \frac{1}{m} \sum_i a_{ij} v_i\of{t-1} \Big[ \denoiser\ofp{t-1}(r_j\of{t-1}) + \sum_{l\neq j} a_{il}v_i\of{t-1}\denoiser\ofpp{t-1}(r_{il}\of{t-1}) \Big] 
\label{eq:bad},
\end{align}
}\else{
\begin{align}
\lefteqn{
\sum_i a_{ij} \Big[ \frac{v_i\of{t-1}}{m}\sum_l \denoiser\ofp{t-1}(r_l\of{t-1})
-\frac{v_i\of{t-1}}{m} \sum_{l\neq j} \denoiser\ofp{t-1}(r_{il}\of{t-1}) \Big]
}\nonumber\\
&= \frac{1}{m}\sum_i a_{ij} v_i\of{t-1} \Big[ \denoiser\ofp{t-1}(r_j\of{t-1}) 
  \nonumber\\&\quad
  +\sum_{l\neq j} \Big(\denoiser\ofp{t-1}(r_l\of{t-1})-\denoiser\ofp{t-1}(r_{il}\of{t-1}) \Big)\Big] \\
&\approx \frac{1}{m} \sum_i a_{ij} v_i\of{t-1} \Big[ \denoiser\ofp{t-1}(r_j\of{t-1}) 
  \nonumber\\&\quad
  + \sum_{l\neq j} a_{il}v_i\of{t-1}\denoiser\ofpp{t-1}(r_{il}\of{t-1}) \Big] 
\label{eq:bad}, \hspace{30mm}
\end{align}
}\fi
where for the last step we used the Taylor-series expansion
\ifarxiv{
\begin{align}
\denoiser\ofp{t-1}(r_{l}\of{t-1})
=\denoiser\ofp{t-1}(r_{il}\of{t-1}+a_{il}v_i\of{t-1})
&= \denoiser\ofp{t-1}(r_{il}\of{t-1}) + a_{il}v_i\of{t-1}\denoiser\ofpp{t-1}(r_{il}\of{t-1}) + O(1/m) 
\label{eq:Taylor2}
\end{align}
}\else{
\begin{align}
\lefteqn{\denoiser\ofp{t-1}(r_{l}\of{t-1})}\nonumber\\
&=\denoiser\ofp{t-1}(r_{il}\of{t-1}+a_{il}v_i\of{t-1}) \\
&= \denoiser\ofp{t-1}(r_{il}\of{t-1}) + a_{il}v_i\of{t-1}\denoiser\ofpp{t-1}(r_{il}\of{t-1}) + O(1/m) 
\label{eq:Taylor2}
\end{align}
}\fi
and dropped the $O(1/m)$ term, since it will vanish relative to the $a_{il}v_i\of{t-1}\denoiser\ofpp{t-1}(r_{il}\of{t-1})$ term in the large-system limit.
Looking at \eqref{bad}, the first term is
\begin{align}
\frac{1}{m} \sum_{i=1}^m \underbrace{ a_{ij} v_i\of{t-1} \denoiser\ofp{t-1}(r_{j}\of{t-1}) }_{\displaystyle O(1/\sqrt{m}) }
= O(1/\sqrt{m})
\end{align}
since $a_{ij}\in\pm 1/\sqrt{m}$ and $v_i\of{t-1} \denoiser\ofp{t-1}(r_{j}\of{t-1})$ is $O(1)$ due to \lemref{scaling}. 
Thus the first term in \eqref{bad} will vanish in the large-system limit.
The second term in \eqref{bad} is
\begin{align}
\frac{1}{m} \sum_{i=1}^m 
      \underbrace{
        a_{ij} (v_i\of{t-1})^2 \underbrace{ 
        \sum_{l\neq j} a_{il}\denoiser\ofpp{t-1}(r_{il}\of{t-1}) }_{\displaystyle O(1)}
      }_{\displaystyle O(1/\sqrt{m})}
= O(1/\sqrt{m}) 
\label{eq:bad2},
\end{align}
which will also vanish in the large-system limit.
The $O(1)$ scaling in \eqref{bad2} follows from \lemref{scaling} under \assref{indep},
and the $O(1/\sqrt{m})$ scaling follows from the fact that $a_{il}\in\pm 1/\sqrt{m}$ and $(v_i\of{t-1})^2=O(1)$.

Thus, for large $m$ and the AMP choice of $\mu_i\of{t}$, equation \eqref{ejt} becomes 
\begin{align}
e_j\of{t}
&\approx \sum_i a_{ij} \sum_{l\neq j} a_{il} \big[ 
        \underbrace{ x_l - \denoiser\of{t-1}(r_{il}\of{t-1}) 
                }_{\displaystyle \defn \epsilon_{il}\of{t}}
        \big] + \sum_i a_{ij} w_i 
\label{eq:ejt2} .
\end{align}
\textb{Recall that, under \assref{indep}, $a_{ij}$ is a realization of equiprobable $A_{ij}\in\pm\frac{1}{\sqrt{m}}$ that is independent of 
$\{A_{il}\}_{l\neq j}$, $\{x_l\}_{l=1}^n$, and $\{r_{il}\of{t-1}\}_{l=1}^n$. 
This implies that $A_{ij}$ is also independent of $\epsilon_{il}\of{t}=x_l - \denoiser\of{t-1}(r_{il}\of{t-1})$ for any $l$.}
Thus we can apply the central limit theorem to say that, for any fixed $\{\epsilon_{il}\of{t}\}$, the first term converges to a Gaussian with mean and variance
\ifarxiv{
\begin{align}
\E\bigg[ \sum_i A_{ij} \sum_{l\neq j} A_{il} \epsilon_{il}\of{t} \bigg] 
&= \sum_i \E[A_{ij}] \sum_{l\neq j} \E[ A_{il} ] \epsilon_{il}\of{t} = 0 \\
\E\Big[ \Big( \sum_i A_{ij} \sum_{l\neq j} A_{il} \epsilon_{il}\of{t} \Big)^2 \Big]
&= \sum_i \E[A_{ij}^2] \sum_{l\neq j} \E[ A_{il}^2 ] (\epsilon_{il}\of{t})^2 
= \frac{1}{m^2} \sum_i \sum_{l\neq j} (\epsilon_{il}\of{t})^2 
\label{eq:variance} .
\end{align}
}\else{
\begin{align}
\E\Big[ \sum_i A_{ij} \sum_{l\neq j} A_{il} \epsilon_{il}\of{t} \Big] 
&= \sum_i \E[A_{ij}] \sum_{l\neq j} \E[ A_{il} ] \epsilon_{il}\of{t} \\
&= 0 
\end{align}
\begin{align}
\E\bigg[ \Big( \sum_i A_{ij} \sum_{l\neq j} A_{il} \epsilon_{il}\of{t} \Big)^2 \bigg]
&= \sum_i \E[A_{ij}^2] \sum_{l\neq j} \E[ A_{il}^2 ] (\epsilon_{il}\of{t})^2\\
&= \frac{1}{m^2} \sum_i \sum_{l\neq j} (\epsilon_{il}\of{t})^2 
\label{eq:variance} .
\end{align}
}\fi
From the Taylor expansion \eqref{Taylor1}, we have
\begin{align}
\epsilon_{il}\of{t}
&= x_l - \denoiser\of{t-1}(r_{il}\of{t-1}) \\
&= \underbrace{ x_l - \denoiser\of{t-1}(r_{l}\of{t-1}) 
        }_{\displaystyle \defn \epsilon_l\of{t}}
   + \underbrace{ a_{il}v_i\of{t-1} \denoiser\ofp{t-1}(r_{il}\of{t-1}) + O(1/m) 
        }_{\displaystyle O(1/\sqrt{m})} 
\label{eq:epsilt},
\end{align}
where the $O(1/\sqrt{m})$ scaling follows from the facts that $a_{il}\in\pm 1/\sqrt{m}$ and $v_i\of{t-1} \denoiser\ofp{t-1}(r_{il}\of{t-1})$ is $O(1)$.
Notice that $\epsilon_l\of{t}$ is the denoiser output error.
\textb{Because both $x_l$ and $r_l\of{t-1}$ are $O(1)$, it follows that $\epsilon_l\of{t}$ is also $O(1)$.}
Because the $O(1/\sqrt{m})$ term in \eqref{epsilt} vanishes in the large-system limit,
we see that \eqref{variance} becomes
\begin{align}
\frac{1}{m^2} \sum_i \sum_{l\neq j} (\epsilon_{il}\of{t})^2 
&\approx \frac{1}{m^2} \sum_{i=1}^m \sum_{l\neq j} (\epsilon_l\of{t})^2 
= \frac{1}{m} \sum_{l\neq j} (\epsilon_l\of{t})^2 \\
&= \underbrace{ \frac{n}{m}\frac{1}{n}\sum_{l=1}^n (\epsilon_l\of{t})^2 }_{\displaystyle O(1)} -\underbrace{ \frac{1}{m}(\epsilon_j\of{t})^2 }_{\displaystyle O(1/m)} 
\approx \delta^{-1} \mse\of{t} 
\label{eq:variance2} ,
\end{align}
where 
\begin{align}
\mse\of{t} \defn \lim_{n\rightarrow\infty} \frac{1}{n} \sum_{l=1}^n (\epsilon_l\of{t})^2 
\label{eq:mset}
\end{align}
is the average squared error on the denoiser output $\vec{x}\of{t}$.
We have thus deduced that, in the large-system limit, the first term in \eqref{ejt2} behaves like a zero-mean Gaussian with variance $\delta^{-1}\mse\of{t}$.
For the second term in \eqref{ejt2}, we can again use the central limit theorem to say that, for any fixed $\{w_i\}$, the second term converges to a Gaussian with mean and variance
\begin{align}
\E\Big[ \sum_i A_{ij} w_i \Big] 
&= \sum_i \E[A_{ij}] w_i = 0 \\
\E\bigg[ \Big( \sum_i A_{ij} w_i \Big)^2 \bigg]
&= \sum_i \E[A_{ij}^2] w_i^2 
= \frac{1}{m} \sum_{i=1}^m w_i^2 \approx \tau_w ,
\end{align}
where $\tau_w$ denotes the empirical second moment of the noise:
\begin{align}
\tau_w \defn \lim_{m\rightarrow \infty} \frac{1}{m} \sum_{i=1}^m w_i^2
\label{eq:tauw} .
\end{align}

To summarize, with AMP's choice of $\vec{\mu}\of{t}$ from \eqref{onsager}, the $j$th component of the denoiser input-error behaves like
\begin{align}
e_j\of{t} 
\sim \normal\big(0,\underbrace{ \delta^{-1} \mse\of{t} + \tau_w }_{\displaystyle \defn \tau_r\of{t}} \big)
\label{eq:taurt}
\end{align}
in the large-system limit, where $\normal(\mu,\sigma^2)$ denotes a Gaussian random variable with mean $\mu$ and variance $\sigma^2$.
With other choices of $\vec{\mu}\of{t}$ (e.g., IST's choice of $\vec{\mu}\of{t}=\vec{0}~\forall t$), it is difficult to characterize the denoiser input-error $\vec{e}\of{t}$ and in general it will not be Gaussian.


\section{AMP State Evolution} \label{sec:SE}

In \secref{deriv}, we used \assref{indep} to argue that the AMP algorithm yields a denoiser input-error $\vec{e}\of{t}$ whose components are $\normal(0,\tau_r\of{t})$ in the large system limit. 
Here, $\tau_r\of{t}=\delta^{-1}\mse\of{t}+\tau_w$ where 
$\mse\of{t}$ is the average squared-error at the denoiser output in the large-system limit.

Recalling the definition of $\mse\of{t}$ from \eqref{mset}, we can write
\begin{align}
\frac{1}{n} \sum_{l=1}^n (\epsilon_l\of{t})^2 
&\approx \frac{1}{n} \sum_{l=1}^n \big[ \eta\of{t-1}(x_l + \mc{N}(0,\tau_r\of{t-1})) -x_l \big]^2 \\
&= \E\big[\eta\of{t-1}(X+\mc{N}(0,\tau_r\of{t-1})) - X\big]^2
\end{align}
where $X$ is a scalar random variable defined from the empirical distribution
\begin{align}
X \sim p(x) = \frac{1}{n}\sum_{l=1}^n \delta(x-x_l) ,
\end{align}
with $\delta(\cdot)$ denoting the Dirac delta function.
Thus we can argue that, in the large-system limit,
\begin{align}
\mse\of{t} = \E\big[ \denoiser\of{t\textb{-1}}\big(X+\normal(0,\tau_r\of{t-1})\big) - X \big]^2 
\label{eq:mset1} ,
\end{align}
where $X$ now is distributed according to the $n\rightarrow\infty$ limit of the empirical distribution.
Combining \eqref{mset1} with the update equation for $\tau_r\of{t}$ gives the following recursion for $t=0,1,2,\dots$:%
\begin{subequations}\label{eq:se}
\begin{align}
\tau_r\of{t} 
&= \delta^{-1} \mse\of{t} + \tau_w \\
\mse\of{t+1} 
&= \E\big[ \denoiser\of{t}\big(X+\normal(0,\tau_r\of{t})\big) - X \big]^2 ,
\end{align}
\end{subequations}
initialized with $\mse\of{0}=\E[X^2]$.
The recursion \eqref{se} is known as AMP's ``state evolution'' for the mean-squared error \cite{Donoho:PNAS:09,Bayati:TIT:11,Bayati:AAP:15}.

The reason that we call $\denoiser\of{t}(\cdot)$ a ``denoiser'' should now be clear.
To minimize the mean-squared error $\mse\of{t+1}$, the function $\denoiser\of{t}(\cdot)$ should remove as much of the noise from its input as possible.
The smaller that $\mse\of{t+1}$ is, the smaller the input-noise variance $\tau_r\of{t+1}$ will be during the next iteration.


\section{AMP Variance Estimation} \label{sec:var}

For best performance, the iteration-$t$ denoiser $\denoiser\of{t}(\cdot)$ should be designed in accordance with the iteration-$t$ input noise variance $\tau_r\of{t}$.
With the AMP algorithm, there is an easy way to estimate the value of $\tau_r\of{t}$ at each iteration $t$ from the $\vec{v}\of{t}$ vector, i.e.,
$\tau_r\of{t}\approx \|\vec{v}\of{t}\|^2/m$ \cite{Montanari:Chap:12}. 
Below, we explain this approach using arguments similar to those used above.

Equation \eqref{vit} shows that 
\ifarxiv{
\begin{align}
v_i\of{t}
&= y_i - \sum_{l=1}^n a_{il} \denoiser\of{t-1}(r_{il}\of{t-1})
- v_i\of{t-1} \frac{1}{m}\sum_{l=1}^n \denoiser\ofp{t-1}(r_{il}\of{t-1})
+ \mu_i\of{t} + O(1/m) .
\end{align}
}\else{
\begin{align}
v_i\of{t}
&= y_i - \sum_{l=1}^n a_{il} \denoiser\of{t-1}(r_{il}\of{t-1})
\nonumber\\&\quad
- v_i\of{t-1} \frac{1}{m}\sum_{l=1}^n \denoiser\ofp{t-1}(r_{il}\of{t-1})
+ \mu_i\of{t} + O(1/m) .
\end{align}
}\fi
Ignoring the $O(1/m)$ term and plugging in AMP's choice of $\mu_i\of{t}$ from \eqref{onsager} yields
\ifarxiv{
\begin{align}
v_i\of{t}
&\approx y_i - \sum_{l=1}^n a_{il} \denoiser\of{t-1}(r_{il}\of{t-1})
+ v_i\of{t-1} \frac{1}{m}\sum_{l=1}^n \big[ \denoiser\ofp{t-1}(r_{l}\of{t-1})
  -\denoiser\ofp{t-1}(r_{il}\of{t-1}) \big] \\
&= y_i - \sum_{l=1}^n a_{il} \denoiser\of{t-1}(r_{il}\of{t-1})
+ v_i\of{t-1} \frac{n}{m} \frac{1}{n}\sum_{l=1}^n 
  \underbrace{
  \big[ a_{il} v_i\of{t-1}\denoiser\ofpp{t-1}(r_{il}\of{t-1}) + O(1/m) \big] 
  }_{\displaystyle O(1/\sqrt{m})} 
\label{eq:vit1},
\end{align}
}\else{
\begin{align}
v_i\of{t}
&\approx y_i - \sum_{l=1}^n a_{il} \denoiser\of{t-1}(r_{il}\of{t-1})
\nonumber\\&\quad
+ v_i\of{t-1} \frac{1}{m}\sum_{l=1}^n \big[ \denoiser\ofp{t-1}(r_{l}\of{t-1})
  -\denoiser\ofp{t-1}(r_{il}\of{t-1}) \big] \\
&= y_i - \sum_{l=1}^n a_{il} \denoiser\of{t-1}(r_{il}\of{t-1})
\nonumber\\&\quad
+ v_i\of{t-1} \frac{n}{m} \frac{1}{n}\sum_{l=1}^n 
  \underbrace{
  \big[ a_{il} v_i\of{t-1}\denoiser\ofpp{t-1}(r_{il}\of{t-1}) + O(1/m) \big] 
  }_{\displaystyle O(1/\sqrt{m})} 
\label{eq:vit1},
\end{align}
}\fi
where we used the Taylor series \eqref{Taylor2} in the second step and $a_{il}\in\pm 1/\sqrt{m}$ to justify the $O(1/\sqrt{m})$ scaling.
Since the last term in \eqref{vit1} is the scaled average of $O(1/\sqrt{m})$ terms, with $O(1)$ scaling, the entire term is $O(1/\sqrt{m})$.
We can thus drop it since it will vanish relative to the others in the large-system limit. 
Doing this and plugging in $\vec{y}=\vec{Ax}+\vec{w}$ yields 
\begin{align}
v_i\of{t}
&\approx w_i + \sum_{l=1}^n a_{il} \big[ \underbrace{ x_l-\denoiser\of{t-1}(r_{il}\of{t-1}) }_{\displaystyle =\epsilon_{il}\of{t}} \big] ,
\end{align}
recalling the definition of $\epsilon_{il}\of{t}$ from \eqref{ejt2}.
Squaring the result and averaging over $i$ yields
\ifarxiv{
\begin{align}
\frac{1}{m}\sum_{i=1}^m (v_i\of{t})^2
&\approx \frac{1}{m}\sum_{i=1}^m w_i^2 + \frac{1}{m}\sum_{i=1}^m \bigg( \sum_{l=1}^n a_{il} \epsilon_{il}\of{t} \bigg)^2 + \frac{2}{m}\sum_{i=1}^m \bigg( w_i \sum_{l=1}^n a_{il} \big[ x_l-\denoiser\of{t-1}(r_{il}\of{t-1}) \big] \bigg)
\label{eq:taurhat}.
\end{align}
}\else{
\begin{align}
\frac{1}{m}\sum_{i=1}^m (v_i\of{t})^2
&\approx \frac{1}{m}\sum_{i=1}^m w_i^2 + \frac{1}{m}\sum_{i=1}^m \bigg( \sum_{l=1}^n a_{il} \epsilon_{il}\of{t} \bigg)^2 
\nonumber\\&\quad
+ \frac{2}{m}\sum_{i=1}^m \bigg( w_i \sum_{l=1}^n a_{il} \big[ x_l-\denoiser\of{t-1}(r_{il}\of{t-1}) \big] \bigg)
\label{eq:taurhat}.
\end{align}
}\fi
We now examine the components of \eqref{taurhat} in the large-system limit.
By definition, the first term in \eqref{taurhat} converges to $\tau_w$.
By the law of large numbers, the second term converges to 
\ifarxiv{
\begin{align}
\lim_{n\rightarrow\infty} 
\E\bigg[\bigg( \sum_{l=1}^n A_{il} \epsilon_{il}\of{t} \bigg)^2\bigg]
&= \lim_{n\rightarrow\infty} 
\sum_{l=1}^n \sum_{j=1}^n \E[ A_{il} A_{ij} ] \epsilon_{il}\of{t} \epsilon_{ij}\of{t} 
= \lim_{n\rightarrow\infty} 
\frac{1}{m} \sum_{l=1}^n (\epsilon_{il}\of{t})^2 ,
\end{align}
}\else{
\begin{align}
\lim_{n\rightarrow\infty} 
\E\bigg[\bigg( \sum_{l=1}^n A_{il} \epsilon_{il}\of{t} \bigg)^2\bigg]
&= \lim_{n\rightarrow\infty} 
\sum_{l=1}^n \sum_{j=1}^n \E[ A_{il} A_{ij} ] \epsilon_{il}\of{t} \epsilon_{ij}\of{t} \\
&= \lim_{n\rightarrow\infty} 
\frac{1}{m} \sum_{l=1}^n (\epsilon_{il}\of{t})^2 ,
\end{align}
}\fi
since $\E[ A_{il} A_{ij} ]=1/m$ when $l=j$ and $\E[ A_{il} A_{ij} ]=0$ when $l\neq j$.
Using the relationship between $\epsilon_{il}\of{t}$ and $\epsilon_l\of{t}$ from \eqref{epsilt}, it can be seen that
\begin{align}
\lim_{n\rightarrow\infty} 
\frac{1}{m} \sum_{l=1}^n (\epsilon_{il}\of{t})^2 
&= \lim_{n\rightarrow\infty} \frac{n}{m} \frac{1}{n} \sum_{l=1}^n (\epsilon_{l}\of{t})^2 
= \delta^{-1} \mc{E}\of{t}  
\end{align}
where $m$ is implicitly a function of $n$ because $m/n=O(1)$. 
In summary,
\begin{align}
\lim_{m\rightarrow\infty} \frac{1}{m}\sum_{i=1}^m (v_i\of{t})^2
= \tau_w + \delta^{-1} \mc{E}\of{t} = \tau_r\of{t},
\end{align}
which shows that $\tau_r\of{t}$ is well estimated by $\|\vec{v}\of{t}\|^2/m$ in the large-system limit.


\newcommand{\sz}{0.6}

\section{Numerical Experiments}

We now present numerical experiments that demonstrate the AMP behaviors discussed above.
In all experiments, we used
a sampling ratio of $\delta=0.5$,
$\{A_{ij}\}$ drawn i.i.d.\ zero-mean Gaussian with variance $1/m$,
$\{x_j\}$ drawn i.i.d.\ from the Bernoulli-Gaussian distribution with sparsity rate $\beta=0.1$ (i.e.,
$p_X(x_j)=(1-\beta)\delta(x_j)+\beta\mc{N}(x_j;0,1)~\forall j$, where $\delta(\cdot)$ denotes the Dirac delta distribution),
and $\{w_{i}\}$ drawn i.i.d.\ zero-mean Gaussian with variance $\beta 10^{-\text{SNRdB}/10}$ and $\text{SNRdB}=20$, so that $\E[\|\vec{Ax}\|^2]/\E[\|\vec{w}\|^2]\approx 20$ dB.
We experimented with two denoisers: the MMSE denoiser $\denoiser\of{t}(r_j)=\E[X\,|\,r_j=X+\mc{N}(0,\tau_r\of{t})]$ and the soft-thresholding denoiser $\denoiser\of{t}(r_j)=\sgn(r_j)\max\{0,|r_j|-\alpha\sqrt{\tau_r\of{t}}\}$ with $\alpha=1.14$, which is the minimax choice, i.e., the value of $\alpha$ that minimizes the maximum MSE over all $0.1$-sparse signals (see \cite{Montanari:Chap:12} for more details).
With the soft-thresholding denoiser, AMP solves the LASSO problem 
``$\argmin_{\vec{x}} \{\frac{1}{2}\|\vec{y}-\vec{Ax}\|^2+\lambda\|\vec{x}\|_1\}$''
for some value of $\lambda$ \cite{Donoho:PNAS:09,Montanari:Chap:12}.

Figures~\ref{fig:mse_fixed1_bg_n300_t10000}-\ref{fig:mse_fixed1_st_n30000_t100} plot finite-dimensional versions of the denoiser output MSE $\mc{E}\of{t}$ and the denoiser input-error variance $\tau_r\of{t}$ versus iteration $t$ for both the AMP algorithm \eqref{alg} and the AMP state evolution \eqref{se}.
For the AMP algorithm, the iteration-$t$ denoiser output MSE was computed as 
$\mc{E}_n\of{t} = \frac{1}{n}\sum_{j=1}^n (x_j-x_j\of{t})^2$ 
and the denoiser input-error variance was computed as 
$\tau_{r,n}\of{t} = \|\vec{v}\of{t}\|^2/m$,
where the subscript $n$ indicates the dimensional dependence of these quantities.
For the AMP state evolution, the denoiser output MSE was computed as 
\begin{align}
\mc{E}_n\of{t} =
\begin{cases}
\E\big[\eta\of{t-1}(X+\mc{N}(0,\tau_{r,n}\of{t-1})) - X\big]^2 & t>0 \\
\E\big[X^2\big] & t=0 ,
\end{cases}
\end{align}
with the expectation evaluated using the $n$-term empirical distribution for $X$, and the iteration-$t$ denoiser input-error variance was computed as $\tau_{r,n}\of{t}=\delta^{-1}\mc{E}_n\of{t}+\tau_{w,n}$ using the empirical noise variance $\tau_{w,n}=\frac{1}{m}\sum_{i=1}^m w_i^2$.
Each figure plots the empirical mean and standard deviation over $T\in\{100,1000,10000\}$ random draws of $\Amat$ for a \emph{single fixed} draw of $\vec{x}$ and $\vec{w}$.

\Figref{mse_fixed1_bg_n300_t10000} shows the results for the MMSE denoiser at dimension $n=300$.
The figure shows a good, but not great, agreement between the state evolution and average AMP quantities, where the average was computed over $T=10000$ realizations of $\Amat$.
Furthermore, the error bars in \figref{mse_fixed1_bg_n300_t10000}, which show the empirical standard deviation over the $T$ realizations, indicate that there was considerable dependence of the trajectories $\{\mc{E}_n\of{t}\}_{t=1}^{30}$ and $\{\tau_{r,n}\of{t}\}_{t=1}^{30}$ on the realization of $\Amat$ when $n=300$.
\Figref{mse_fixed1_bg_n3000_t1000} and~\ref{fig:mse_fixed1_bg_n30000_t100} plot the same quantities for dimensions $n=3000$ and $n=30000$, respectively.
These figures show that, as the dimension $n$ increases, the agreement between the state-evolution and average AMP trajectories improves and the standard deviation of the AMP trajectories decreases. 
\tabref{std} suggests that the standard deviation decreases proportional to $1/\sqrt{n}$.
\Figref{mse_fixed1_bg_n30000_t100} shows that, when $n=30000$, the trajectories $\{\mc{E}_n\of{t}\}_{t=1}^{30}$ and $\{\tau_{r,n}\of{t}\}_{t=1}^{30}$ are nearly invariant to changes in $\Amat$.

Figures~\ref{fig:mse_fixed1_st_n300_t10000}-\ref{fig:mse_fixed1_st_n30000_t100} are similar to Figures~\ref{fig:mse_fixed1_bg_n300_t10000}-\ref{fig:mse_fixed1_bg_n30000_t100}, except that they show the results for the soft-thresholding denoiser.
As expected, the use of the soft-thresholding denoiser results in larger MSEs than the MMSE denoiser.
But, otherwise, the trends are the same:
the agreement between the state-evolution and average AMP trajectories increases with the dimension $n$, and the standard deviation of the AMP trajectories decreases proportional to $1/\sqrt{n}$.

Note that, because the state evolution was computed using the empirical distributions of $\{x_j\}_{j=1}^n$ and $\{w_i\}_{i=1}^m$, which change from one figure to the next (e.g., as $n$ and $m$ change), the state evolution trajectories vary across Figures~\ref{fig:mse_fixed1_bg_n300_t10000}-\ref{fig:mse_fixed1_st_n30000_t100}.

Finally, to give evidence that the denoiser input error $\{e_j\of{t}\}$ is approximately Gaussian, we show quantile-quantile (QQ) plots in
Figures~\ref{fig:qq_bg_i5_n3000} and~\ref{fig:qq_st_i5_n3000}
at iteration $t=5$ and dimension $n=3000$ for the MMSE and soft-thresholding denoisers, respectively.
The figures show that the quantiles of $\{e_j\of{t}\}$ are very close to those of a zero-mean Gaussian random variable.
Although not shown here, QQ plots at other iterations $t$ look similar, and the QQ plots become more linear (i.e., $\{e_j\of{t}\}$ looks more Gaussian) as $n$ grows larger.

\putFrag{mse_fixed1_bg_n300_t10000}
        {Denoiser output error $\mc{E}_n\of{t}$ and denoiser input-error variance $\tau_{r,n}\of{t}$ versus iteration for AMP and its state evolution with MMSE denoising and $n=300$.  Dashed lines show the empirical average over $10000$ random draws of $\Amat$ and error bars show the empirical standard deviation.}
        {3.0}
        {\psfrag{bg, n=300, trials=10000}{}
         \psfrag{iterations}[t][t]{\sf iteration}
         \psfrag{rvar-amp}[bl][bl][\sz]{\sf $\tau_{r,n}\of{t}$-AMP}
         \psfrag{rvar-se}[bl][bl][\sz]{\sf $\tau_{r,n}\of{t}$-SE}
         \psfrag{mse-amp}[bl][bl][\sz]{\sf $\mc{E}_n\of{t}$-AMP}
         \psfrag{mse-se}[bl][bl][\sz]{\sf $\mc{E}_n\of{t}$-SE}
        }

\putFrag{mse_fixed1_bg_n3000_t1000}
        {Denoiser output MSE $\mc{E}_n\of{t}$ and denoiser input-error variance $\tau_{r,n}\of{t}$ versus iteration for AMP and its state evolution with MMSE denoising and $n=3000$.  Dashed lines show the empirical average over $1000$ random draws of $\Amat$ and error bars show the empirical standard deviation.}
        {3.0}
        {\psfrag{bg, n=3000, trials=1000}{}
         \psfrag{iterations}[t][t]{\sf iteration}
         \psfrag{rvar-amp}[bl][bl][\sz]{\sf $\tau_{r,n}\of{t}$-AMP}
         \psfrag{rvar-se}[bl][bl][\sz]{\sf $\tau_{r,n}\of{t}$-SE}
         \psfrag{mse-amp}[bl][bl][\sz]{\sf $\mc{E}_n\of{t}$-AMP}
         \psfrag{mse-se}[bl][bl][\sz]{\sf $\mc{E}_n\of{t}$-SE}
        }

\putFrag{mse_fixed1_bg_n30000_t100}
        {Denoiser output MSE $\mc{E}_n\of{t}$ and denoiser input-error variance $\tau_{r,n}\of{t}$ versus iteration for AMP and its state evolution with MMSE denoising and $n=30000$.  Dashed lines show the empirical average over $100$ random draws of $\Amat$ and error bars show the empirical standard deviation.}
        {3.0}
        {\psfrag{bg, n=30000, trials=100}{}
         \psfrag{iterations}[t][t]{\sf iteration}
         \psfrag{rvar-amp}[bl][bl][\sz]{\sf $\tau_{r,n}\of{t}$-AMP}
         \psfrag{rvar-se}[bl][bl][\sz]{\sf $\tau_{r,n}\of{t}$-SE}
         \psfrag{mse-amp}[bl][bl][\sz]{\sf $\mc{E}_n\of{t}$-AMP}
         \psfrag{mse-se}[bl][bl][\sz]{\sf $\mc{E}_n\of{t}$-SE}
        }

\putFrag{mse_fixed1_st_n300_t10000}
        {Denoiser output MSE $\mc{E}_n\of{t}$ and denoiser input-error variance $\tau_{r,n}\of{t}$ versus iteration for AMP and its state evolution with soft-threshold denoising and $n=300$.  Dashed lines show the empirical average over $10000$ random draws of $\Amat$ and error bars show the empirical standard deviation.}
        {3.0}
        {\psfrag{st, n=300, trials=10000}{}
         \psfrag{iterations}[t][t]{\sf iteration}
         \psfrag{rvar-amp}[bl][bl][\sz]{\sf $\tau_{r,n}\of{t}$-AMP}
         \psfrag{rvar-se}[bl][bl][\sz]{\sf $\tau_{r,n}\of{t}$-SE}
         \psfrag{mse-amp}[bl][bl][\sz]{\sf $\mc{E}_n\of{t}$-AMP}
         \psfrag{mse-se}[bl][bl][\sz]{\sf $\mc{E}_n\of{t}$-SE}
        }

\putFrag{mse_fixed1_st_n3000_t1000}
        {Denoiser output MSE $\mc{E}_n\of{t}$ and denoiser input-error variance $\tau_{r,n}\of{t}$ versus iteration for AMP and its state evolution with soft-threshold denoising and $n=3000$.  Dashed lines show the empirical average over $1000$ random draws of $\Amat$ and error bars show the empirical standard deviation.}
        {3.0}
        {\psfrag{st, n=3000, trials=1000}{}
         \psfrag{iterations}[t][t]{\sf iteration}
         \psfrag{rvar-amp}[bl][bl][\sz]{\sf $\tau_{r,n}\of{t}$-AMP}
         \psfrag{rvar-se}[bl][bl][\sz]{\sf $\tau_{r,n}\of{t}$-SE}
         \psfrag{mse-amp}[bl][bl][\sz]{\sf $\mc{E}_n\of{t}$-AMP}
         \psfrag{mse-se}[bl][bl][\sz]{\sf $\mc{E}_n\of{t}$-SE}
        }

\putFrag{mse_fixed1_st_n30000_t100}
        {Denoiser output MSE $\mc{E}_n\of{t}$ and denoiser input-error variance $\tau_{r,n}\of{t}$ versus iteration for AMP and its state evolution with soft-threshold denoising and $n=30000$.  Dashed lines show the empirical average over $100$ random draws of $\Amat$ and error bars show the empirical standard deviation.}
        {3.0}
        {\psfrag{st, n=30000, trials=100}{}
         \psfrag{iterations}[t][t]{\sf iteration}
         \psfrag{rvar-amp}[bl][bl][\sz]{\sf $\tau_{r,n}\of{t}$-AMP}
         \psfrag{rvar-se}[bl][bl][\sz]{\sf $\tau_{r,n}\of{t}$-SE}
         \psfrag{mse-amp}[bl][bl][\sz]{\sf $\mc{E}_n\of{t}$-AMP}
         \psfrag{mse-se}[bl][bl][\sz]{\sf $\mc{E}_n\of{t}$-SE}
        }

\putFrag{qq_bg_i5_n3000}
        {Quantiles of standard normal versus quantiles of 
         AMP denoiser input-error $\{e_j\of{t}\}$ at iteration $t=5$ 
         with MMSE denoising and $n=3000$.}
        {3.0}
        {\psfrag{QQ Plot of Sample Data versus Standard Normal}{}}

\putFrag{qq_st_i5_n3000}
        {Quantiles of standard normal versus quantiles of 
         AMP denoiser input-error $\{e_j\of{t}\}$ at iteration $t=5$ 
         with soft-threshold denoising and $n=3000$.}
        {3.0}
        {\psfrag{QQ Plot of Sample Data versus Standard Normal}{}}

\begin{table}[t]
 \caption{Numerical evidence that $\text{std}(\mc{E}_n\of{t})\sqrt{n}$ and $\text{std}(\tau_{r,n}\of{t})\sqrt{n}$ are approximately constant with $n$, implying that $\text{std}(\mc{E}_n\of{t})$ and $\text{std}(\tau_{r,n}\of{t})$ scale as approximately $1/\sqrt{n}$.}
 \begin{center}
  \ifarxiv{
  \begin{tabular}{|c|c|cccc|}\hline
   & $n$ & 1000 & 3000 & 10000 & 30000 \\\hline
   \multirow{2}{*}{MMSE denoiser} & $\text{std}(\mc{E}_n\of{29})\sqrt{n}$ 
        & 0.0011 & 0.0008 & 0.0010 & 0.0010 \\
   & $\text{std}(\tau_{r,n}\of{29})\sqrt{n}$ 
        & 0.0026 & 0.0017 & 0.0022 & 0.0020 \\\hline
   \multirow{2}{*}{Soft Threshold} & $\text{std}(\mc{E}_n\of{29})\sqrt{n}$ 
        & 0.0055 & 0.0044 & 0.0037 & 0.0040 \\
   & $\text{std}(\tau_{r,n}\of{29})\sqrt{n}$ 
        & 0.0118 & 0.0098 & 0.0106 & 0.0093 \\\hline
  \end{tabular}
  }\else{
  \begin{tabular}{|@{\;}c@{\;}|@{\;}c@{\;}|cccc|}\hline
   & $n$ & 1000 & 3000 & 10000 & 30000 \\\hline
   \multirow{2}{*}{MMSE denoiser} & $\text{std}(\mc{E}_n\of{29})\sqrt{n}$ 
        & 0.0011 & 0.0008 & 0.0010 & 0.0010 \\
   & $\text{std}(\tau_{r,n}\of{29})\sqrt{n}$ 
        & 0.0026 & 0.0017 & 0.0022 & 0.0020 \\\hline
   \multirow{2}{*}{Soft Threshold} & $\text{std}(\mc{E}_n\of{29})\sqrt{n}$ 
        & 0.0055 & 0.0044 & 0.0037 & 0.0040 \\
   & $\text{std}(\tau_{r,n}\of{29})\sqrt{n}$ 
        & 0.0118 & 0.0098 & 0.0106 & 0.0093 \\\hline
  \end{tabular}
  }\fi
 \end{center}
 \label{tab:std}
\end{table}


\section{Conclusion}

For the linear regression problem, we presented a simple derivation of AMP and its MSE state-evolution based on the idea of ``first-order cancellation.''
In particular, our derivation writes the linear transform of the denoiser output, $\Amat\vec{x}\of{t}$, as the sum of a term that is weakly dependent on the previous iteration and another term that is strongly dependent but canceled by AMP's Onsager correction term in the large-system limit.
Our derivation provides insights that are missing from the usual loopy belief-propagation derivation of AMP, while being much more accessible than Bayati et al.'s rigorous analysis of AMP. 


\appendix

Here we establish \lemref{scaling2}, which says that the elements of $\vec{v}\of{t}, \vec{r}\of{t}, \vec{x}\of{t}$, and $\vec{\mu}\of{t}$ scale as $O(1)$ in the large-system limit under \assref{indep}.
We do this by induction.

From the initialization $\vec{x}\of{0}=\vec{0}=\vec{\mu}\of{0}$, we have that $\vec{v}\of{0}=\vec{y}$, whose elements are $O(1)$, and we have 
$\vec{r}\of{0}=\vec{A}\tran\vec{y}=\vec{A}\tran\vec{Ax}+\vec{A}\tran\vec{w}$.
Examining the $j$th entry, we see that
\begin{align}
r_j\of{0}
&= \sum_{i} a_{ij} \sum_{l} a_{il} x_l + \sum_{i} a_{ij} w_i \\
&= x_j \underbrace{ \sum_{i=1}^m a_{ij}^2 }_{\displaystyle = 1} + \underbrace{ \sum_{i} a_{ij} \underbrace{\sum_{l\neq j} a_{il} x_l}_{\displaystyle O(1)} }_{\displaystyle O(1)} + \underbrace{\sum_{i} a_{ij} w_i}_{\displaystyle O(1)}
\end{align}
since $a_{ij}^2=1/m~\forall ij$ and where the $O(1)$ scalings follow from \textb{\lemref{scaling}} under \textb{\assref{indep}.}
Thus the elements of $\vec{r}\of{0}$ are $O(1)$.
Because $x_j\of{1}=\denoiser\of{0}(r_j\of{0})$, the elements of $\vec{x}\of{1}$ are also $O(1)$.
And from \eqref{onsager},
\begin{align}
\mu_i\of{1}
&= \frac{n}{m} v_i\of{0} \frac{1}{n}\sum_{j=1}^n \denoiser\ofp{0}(r_j\of{0}) ,
\end{align}
where $\frac{n}{m}$, $v_i\of{0}$, and $\frac{1}{n}\sum_{j=1}^n \denoiser\ofp{0}(r_j\of{0})$ are all $O(1)$, implying that the elements of $\vec{\mu}\of{1}$ are $O(1)$.

Now, suppose that the elements of $\vec{r}\of{t-1},\vec{v}\of{t-1},\vec{x}\of{t},\vec{\mu}\of{t}$ are all $O(1)$, which we know occurs when $t=1$.  
Then from \eqref{alg} we have that
\ifarxiv{
\begin{align}
v_i\of{t}
&= y_i - \sum_{l=1}^n a_{il} x_l\of{t} + \mu_i\of{t} \\
&= y_i - \sum_{l=1}^n a_{il} \denoiser\of{t-1}(r_l\of{t-1}) + \mu_i\of{t} \\
&= y_i - \sum_{l=1}^n a_{il} \denoiser\of{t-1}\Big(
\underbrace{ x_l\of{t-1}+\sum_{k\neq i}a_{kl}v_k\of{t-1} 
        }_{\displaystyle = r_{il}\of{t-1}}
+a_{il}v_i\of{t-1}\Big) + \mu_i\of{t} ,
\end{align}
}\else{
\begin{align}
v_i\of{t}
&= y_i - \sum_{l=1}^n a_{il} x_l\of{t} + \mu_i\of{t} \\
&= y_i - \sum_{l=1}^n a_{il} \denoiser\of{t-1}(r_l\of{t-1}) + \mu_i\of{t} \\
&= y_i - \sum_{l=1}^n a_{il} \denoiser\of{t-1}\Big(
\underbrace{ x_l\of{t-1}+\sum_{k\neq i}a_{kl}v_k\of{t-1} 
        }_{\displaystyle = r_{il}\of{t-1}}
+a_{il}v_i\of{t-1}\Big) 
\nonumber\\&\quad 
+ \mu_i\of{t} ,
\end{align}
}\fi
where $r_{il}\of{t-1}$ is only weakly dependent on $\{a_{ij}\}_{j=1}^n$, leading us to invoke \assref{indep}.
The Taylor expansion \eqref{Taylor1} (which is admissible since, under the induction hypothesis, the elements of $\vec{v}\of{t-1}$ and $\vec{r}\of{t-1}$ scale as $O(1)$) then yields
\ifarxiv{
\begin{align}
v_i\of{t}
&= y_i - \sum_{l=1}^n a_{il} \big[ \denoiser\of{t-1}(r_{il}\of{t-1}) + a_{il}v_i\of{t-1} \denoiser\ofp{t-1}(r_{il}\of{t-1}) + O(1/m) \big] + \mu_i\of{t} \\
&= y_i - \underbrace{ \sum_{l=1}^n a_{il} \denoiser\of{t-1}(r_{il}\of{t-1}) }_{\displaystyle O(1) } 
- v_i\of{t-1} \underbrace{\frac{1}{m}\sum_{l=1}^n \denoiser\ofp{t-1}(r_{il}\of{t-1})}_{\displaystyle O(1)} 
+ O(1/m)\underbrace{\sum_{l=1}^n a_{il}}_{\displaystyle O(1)} 
+ \mu_i\of{t} 
\label{eq:vit},
\end{align}
}\else{
\begin{align}
v_i\of{t}
&= y_i - \sum_{l=1}^n a_{il} \Big[ \denoiser\of{t-1}(r_{il}\of{t-1}) + a_{il}v_i\of{t-1} \denoiser\ofp{t-1}(r_{il}\of{t-1}) 
\nonumber\\&\quad
+ O(1/m) \Big] + \mu_i\of{t} \\
&= y_i - \underbrace{ \sum_{l=1}^n a_{il} \denoiser\of{t-1}(r_{il}\of{t-1}) }_{\displaystyle O(1) } 
- v_i\of{t-1} \underbrace{\frac{1}{m}\sum_{l=1}^n \denoiser\ofp{t-1}(r_{il}\of{t-1})}_{\displaystyle O(1)} 
\nonumber\\&\quad
+ O(1/m)\underbrace{\sum_{l=1}^n a_{il}}_{\displaystyle O(1)} 
+ \mu_i\of{t} 
\label{eq:vit},
\end{align}
}\fi
where we used the fact that $a_{il}^2=1/m~\forall il$.
In \eqref{vit}, the $O(1)$ scaling of the first and last sums follows from \textb{\lemref{scaling}} under \assref{indep}.
Thus the elements of $\vec{v}\of{t}$ are $O(1)$.

Next, we establish that the elements of $\vec{r}\of{t}$ are $O(1)$ 
when the elements of $\vec{v}\of{t-1},\vec{x}\of{t}$ and $\vec{\mu}\of{t}$ are.
From \eqref{et0}-\eqref{et} we have
\ifarxiv{
\begin{align}
\vec{r}\of{t} 
&= (\vec{I}-\Amat\tran\Amat)\vec{x}\of{t} - (\vec{I}-\Amat\tran\Amat)\vec{x} + \Amat\tran(\vec{w}+\vec{\mu}\of{t}) + \vec{x} \\
&= (\vec{I}-\Amat\tran\Amat)\vec{x}\of{t} + \vec{A}\tran\vec{\mu}\of{t} 
+ \underbrace{ \Amat\tran\Amat\vec{x} + \Amat\tran\vec{w} }_{\displaystyle =\vec{r}\of{0}} 
\label{eq:rt},
\end{align}
}\else{
\begin{align}
\vec{r}\of{t} 
&= (\vec{I}-\Amat\tran\Amat)\vec{x}\of{t} - (\vec{I}-\Amat\tran\Amat)\vec{x} + \Amat\tran(\vec{w}+\vec{\mu}\of{t}) 
\nonumber\\&\quad 
+ \vec{x} \\
&= (\vec{I}-\Amat\tran\Amat)\vec{x}\of{t} + \vec{A}\tran\vec{\mu}\of{t} 
+ \underbrace{ \Amat\tran\Amat\vec{x} + \Amat\tran\vec{w} }_{\displaystyle =\vec{r}\of{0}} 
\label{eq:rt},
\end{align}
}\fi
where we previously established that the elements in $\vec{r}\of{0}$ are $O(1)$.
As for the remaining term in \eqref{rt}, we have from \eqref{IAAxtj} and \eqref{onsager} that
\ifarxiv{
\begin{align}
\lefteqn{ \big[ (\vec{I}-\Amat\tran\Amat)\vec{x}\of{t} + \vec{A}\tran\vec{\mu}\of{t} \big]_j }\nonumber\\
&= -\sum_i a_{ij} \sum_{l\neq j} a_{il} \denoiser\of{t-1}\big( r_{il}\of{t-1} + a_{il}v_i\of{t-1}\big) + \sum_i a_{ij} \frac{1}{m} v_i\of{t-1}  \sum_l \denoiser\ofp{t-1}\big(r_l\of{t-1}\big) \\
&= \sum_i a_{ij} \Big[ \frac{1}{m} v_i\of{t-1}  \sum_l \denoiser\ofp{t-1}\big(r_l\of{t-1}\big) 
-\sum_{l\neq j} a_{il} \denoiser\of{t-1}\big( r_{il}\of{t-1} + a_{il}v_i\of{t-1}\big) \Big] \\
&= \sum_i a_{ij} \Big[ \frac{1}{m} v_i\of{t-1}  \sum_l \denoiser\ofp{t-1}\big(r_l\of{t-1}\big) 
-\sum_{l\neq j} a_{il} \denoiser\of{t-1}\big( r_{il}\of{t-1} \big) 
-\sum_{l\neq j} a_{il}^2 v_i\of{t-1} \denoiser\ofp{t-1}\big(r_{il}\of{t-1}\big) 
+ \underbrace{ O(1/m)\sum_{l\neq j}a_{il} }_{\displaystyle O(1/\sqrt{m})} \Big] ,
\end{align}
}\else{
\begin{align}
\lefteqn{ \big[ (\vec{I}-\Amat\tran\Amat)\vec{x}\of{t} + \vec{A}\tran\vec{\mu}\of{t} \big]_j }\nonumber\\
&= -\sum_i a_{ij} \sum_{l\neq j} a_{il} \denoiser\of{t-1}\big( r_{il}\of{t-1} + a_{il}v_i\of{t-1}\big) 
\nonumber\\&\quad
+ \sum_i a_{ij} \frac{1}{m} v_i\of{t-1}  \sum_l \denoiser\ofp{t-1}\big(r_l\of{t-1}\big) \\
&= \sum_i a_{ij} \Big[ \frac{1}{m} v_i\of{t-1}  \sum_l \denoiser\ofp{t-1}\big(r_l\of{t-1}\big) 
\nonumber\\&\quad
-\sum_{l\neq j} a_{il} \denoiser\of{t-1}\big( r_{il}\of{t-1} + a_{il}v_i\of{t-1}\big) \Big] \\
&= \sum_i a_{ij} \Big[ \frac{1}{m} v_i\of{t-1}  \sum_l \denoiser\ofp{t-1}\big(r_l\of{t-1}\big) 
-\sum_{l\neq j} a_{il} \denoiser\of{t-1}\big( r_{il}\of{t-1} \big) 
\nonumber\\&\quad
-\sum_{l\neq j} a_{il}^2 v_i\of{t-1} \denoiser\ofp{t-1}\big(r_{il}\of{t-1}\big) 
+ \underbrace{ O(1/m)\sum_{l\neq j}a_{il} }_{\displaystyle O(1/\sqrt{m})} \Big] ,
\end{align}
}\fi
where, for the last step, we applied the Taylor expansion \eqref{Taylor1}. 
Applying $a_{il}^2=1/m~\forall il$ and rearranging, we get
\ifarxiv{
\begin{align}
\lefteqn{ \big[ (\vec{I}-\Amat\tran\Amat)\vec{x}\of{t} + \vec{A}\tran\vec{\mu}\of{t} \big]_j }\nonumber\\
&= \sum_i a_{ij} \Big[ \frac{1}{m} v_i\of{t-1}  \sum_l \denoiser\ofp{t-1}\big(r_l\of{t-1}\big) 
-\frac{1}{m} v_i\of{t-1} \sum_{l\neq j} \denoiser\ofp{t-1}\big(r_{il}\of{t-1}\big) 
-\sum_{l\neq j} a_{il} \denoiser\of{t-1}\big( r_{il}\of{t-1} \big) 
+ O(1/\sqrt{m}) \Big] \\
&= \frac{1}{m}\sum_i a_{ij} v_i\of{t-1} \Big[ \sum_l \denoiser\ofp{t-1}\big(r_l\of{t-1}\big)
- \sum_{l\neq j} \denoiser\ofp{t-1}\big(r_{il}\of{t-1}\big) \Big]
-\sum_i a_{ij} \sum_{l\neq j} a_{il} \denoiser\of{t-1}\big( r_{il}\of{t-1} \big)
+ O(1) \\
&= \frac{1}{m}\sum_i a_{ij} v_i\of{t-1} \Big[ \denoiser\ofp{t-1}\big(r_j\of{t-1}\big) 
+ \sum_{l\neq j} \Big( \denoiser\ofp{t-1}\big(r_l\of{t-1}\big)
- \denoiser\ofp{t-1}\big(r_{il}\of{t-1}\big) \Big) \Big]
-\sum_i a_{ij} \sum_{l\neq j} a_{il} \denoiser\of{t-1}\big( r_{il}\of{t-1} \big)
+ O(1) 
\label{eq:pain},
\end{align}
}\else{
\begin{align}
\lefteqn{ \big[ (\vec{I}-\Amat\tran\Amat)\vec{x}\of{t} + \vec{A}\tran\vec{\mu}\of{t} \big]_j }\nonumber\\
&= \sum_i a_{ij} \Big[ \frac{v_i\of{t-1}}{m}  \sum_l \denoiser\ofp{t-1}\big(r_l\of{t-1}\big) 
-\frac{v_i\of{t-1}}{m} \sum_{l\neq j} \denoiser\ofp{t-1}\big(r_{il}\of{t-1}\big)
\nonumber\\&\quad
-\sum_{l\neq j} a_{il} \denoiser\of{t-1}\big( r_{il}\of{t-1} \big) 
+ O(1/\sqrt{m}) \Big] \\
&= \frac{1}{m}\sum_i a_{ij} v_i\of{t-1} \Big[ \sum_l \denoiser\ofp{t-1}\big(r_l\of{t-1}\big)
- \sum_{l\neq j} \denoiser\ofp{t-1}\big(r_{il}\of{t-1}\big) \Big]
\nonumber\\&\quad
-\sum_i a_{ij} \sum_{l\neq j} a_{il} \denoiser\of{t-1}\big( r_{il}\of{t-1} \big)
+ O(1) \\
&= \frac{1}{m}\sum_i a_{ij} v_i\of{t-1} \Big[ \denoiser\ofp{t-1}\big(r_j\of{t-1}\big) 
\nonumber\\&\quad
+ \sum_{l\neq j} \Big( \denoiser\ofp{t-1}\big(r_l\of{t-1}\big)
- \denoiser\ofp{t-1}\big(r_{il}\of{t-1}\big) \Big) \Big]
\nonumber\\&\quad
-\sum_i a_{ij} \sum_{l\neq j} a_{il} \denoiser\of{t-1}\big( r_{il}\of{t-1} \big)
+ O(1) 
\label{eq:pain},
\end{align}
}\fi
where the $O(1)$ term follows from the fact that $a_{ij}\in\pm 1/\sqrt{m}$.
For the 2nd-to-last term in \eqref{pain}, we have
\begin{align}
\sum_i a_{ij} \underbrace{ \sum_{l\neq j} a_{il} \denoiser\of{t-1}\big( r_{il}\of{t-1} \big) }_{\displaystyle O(1)} = O(1) ,
\end{align}
which follows from \textb{\lemref{scaling}} under \assref{indep}.
For the remaining term in \eqref{pain}, we apply the Taylor expansion \eqref{Taylor2} and rearrange terms as follows: 
\ifarxiv{
\begin{align}
\lefteqn{
\frac{1}{m}\sum_i a_{ij} v_i\of{t-1} \Big[ \denoiser\ofp{t-1}\big(r_j\of{t-1}\big) 
+ \sum_{l\neq j} \Big( \denoiser\ofp{t-1}\big(r_l\of{t-1}\big)
- \denoiser\ofp{t-1}\big(r_{il}\of{t-1}\big) \Big) \Big] }\nonumber\\
&= \frac{1}{m}\sum_i a_{ij} v_i\of{t-1} \Big[ \denoiser\ofp{t-1}\big(r_j\of{t-1}\big) 
+ \sum_{l\neq j} \Big( 
a_{il}v_i\of{t-1}\denoiser\ofpp{t-1}(r_{il}\of{t-1}) + O(1/m) 
\Big) \Big] \\
&= \frac{1}{m}\sum_{i=1}^m a_{ij} v_i\of{t-1} \denoiser\ofp{t-1}\big(r_j\of{t-1}\big) 
+ \frac{1}{m}\sum_{i=1}^m a_{ij} (v_i\of{t-1})^2
\underbrace{ \sum_{l\neq j} a_{il}\denoiser\ofpp{t-1}(r_{il}\of{t-1}) }_{\displaystyle O(1)}
+ \frac{1}{m}\sum_{i=1}^m a_{ij} (v_i\of{t-1})^2 \sum_{l\neq j} O(1/m) 
\label{eq:pain2} \\
&= O(1/\sqrt{m})
\label{eq:pain3} 
\end{align}
}\else{
\begin{align}
\lefteqn{
\frac{1}{m}\sum_i a_{ij} v_i\of{t-1} \Big[ \denoiser\ofp{t-1}\big(r_j\of{t-1}\big) }\nonumber\\
\lefteqn{\quad
+ \sum_{l\neq j} \Big( \denoiser\ofp{t-1}\big(r_l\of{t-1}\big)
- \denoiser\ofp{t-1}\big(r_{il}\of{t-1}\big) \Big) \Big] }\nonumber\\
&= \frac{1}{m}\sum_i a_{ij} v_i\of{t-1} \Big[ \denoiser\ofp{t-1}\big(r_j\of{t-1}\big) 
\nonumber\\&\quad
+ \sum_{l\neq j} \Big( 
a_{il}v_i\of{t-1}\denoiser\ofpp{t-1}(r_{il}\of{t-1}) + O(1/m) 
\Big) \Big] \\
&= \frac{1}{m}\sum_{i=1}^m a_{ij} v_i\of{t-1} \denoiser\ofp{t-1}\big(r_j\of{t-1}\big) 
\nonumber\\&\quad
+ \frac{1}{m}\sum_{i=1}^m a_{ij} (v_i\of{t-1})^2
\underbrace{ \sum_{l\neq j} a_{il}\denoiser\ofpp{t-1}(r_{il}\of{t-1}) }_{\displaystyle O(1)}
\nonumber\\&\quad
+ \frac{1}{m}\sum_{i=1}^m a_{ij} (v_i\of{t-1})^2 \sum_{l\neq j} O(1/m) 
\label{eq:pain2} \\
&= O(1/\sqrt{m})
\label{eq:pain3} 
\end{align}
}\fi
where the $O(1)$ scaling of the sum follows from \textb{\lemref{scaling}} under \assref{indep}.
The final $O(1/\sqrt{m})$ scaling in \eqref{pain3} follows since each of the three terms in \eqref{pain2} is the average of $O(1/\sqrt{m})$ terms, due to $a_{ij}\in\pm 1/\sqrt{m}$.
In conclusion, we have established that the elements of $\vec{r}\of{t}$ are $O(1)$ 
when the elements of $\vec{v}\of{t-1},\vec{x}\of{t}$ and $\vec{\mu}\of{t}$ are.

To complete the induction proof, we need to establish that the elements of $\vec{x}\of{t+1}$ and $\vec{\mu}\of{t+1}$ are $O(1)$ when those of $\vec{v}\of{t}$ and $\vec{r}\of{t}$ are.
But this follows straightforwardly from \eqref{alg} and \eqref{onsager}, i.e.,
\begin{align}
x_j\of{t+1} &= \denoiser\of{t}(r_j\of{t}) \\
\mu_i\of{t+1}
&= \frac{n}{m} v_i\of{t} \frac{1}{n}\sum_{j=1}^n \denoiser\ofp{t}(r_j\of{t}) ,
\end{align}
since $n/m$ is $O(1)$ in the large-system limit.

In summary, we have established \lemref{scaling2}, which says the elements of $\vec{v}\of{t}, \vec{r}\of{t}, \vec{x}\of{t}$, and $\vec{\mu}\of{t}$ scale as $O(1)$ in the large-system limit under \assref{indep}.


\ifarxiv{}\else{
\section*{Acknowledgments}
The author thanks Galen Reeves for inspiring discussions and feedback on early drafts of this manuscript.
}\fi

\bibliographystyle{ieeetr}
\bibliography{macros,stc,books,blind,comm,misc,multicarrier,sparse,machine}

\begin{thebibliography}{10}

\bibitem{Donoho:PNAS:09}
D.~L. Donoho, A.~Maleki, and A.~Montanari, ``Message passing algorithms for
  compressed sensing,'' {\em Proc. National Academy of Sciences}, vol.~106,
  pp.~18914--18919, Nov. 2009.

\bibitem{Bayati:AAP:15}
M.~Bayati, M.~Lelarge, and A.~Montanari, ``Universality in polytope phase
  transitions and message passing algorithms,'' {\em Annals of Applied
  Probability}, vol.~25, no.~2, pp.~753--822, 2015.

\bibitem{Bayati:TIT:11}
M.~Bayati and A.~Montanari, ``The dynamics of message passing on dense graphs,
  with applications to compressed sensing,'' {\em IEEE Trans. on Information
  Theory}, vol.~57, pp.~764--785, Feb. 2011.

\bibitem{Reeves:ISIT:16}
G.~Reeves and H.~D. Pfister, ``The replica-symmetric prediction for compressed
  sensing with {G}aussian matrices is exact,'' in {\em Proc. IEEE Internat.
  Symp. on Information Theory}, 2016.

\bibitem{Barbier:ALL:16}
J.~Barbier, M.~Dia, N.~Macris, and F.~Krzakala, ``The mutual information in
  random linear estimation,'' in {\em Proc. Allerton Conf. on Communication,
  Control, and Computing}, pp.~625--632, 2016.

\bibitem{Rush:TIT:18}
C.~Rush and R.~Venkataramanan, ``Finite-sample analysis of approximate message
  passing algorithms,'' {\em IEEE Trans. on Information Theory}, vol.~64,
  no.~11, pp.~7264--7286, 2018.

\bibitem{Yedidia:TIT:05}
J.~S. Yedidia, W.~T. Freeman, and Y.~Weiss, ``Constructing free-energy
  approximations and generalized belief propagation algorithms,'' {\em IEEE
  Trans. on Information Theory}, vol.~51, pp.~2282--2312, July 2005.

\bibitem{Wainwright:FTML:08}
M.~J. Wainwright and M.~I. Jordan, ``Graphical models, exponential families,
  and variational inference,'' {\em Foundations and Trends in Machine
  Learning}, vol.~1, May 2008.

\bibitem{Donoho:ITW:10a}
D.~L. Donoho, A.~Maleki, and A.~Montanari, ``Message passing algorithms for
  compressed sensing: {I. M}otivation and construction,'' in {\em Proc.
  Information Theory Workshop}, (Cairo, Egypt), pp.~1--5, Jan. 2010.

\bibitem{Montanari:Chap:12}
A.~Montanari, ``Graphical models concepts in compressed sensing,'' in {\em
  Compressed Sensing: Theory and Applications} (Y.~C. Eldar and G.~Kutyniok,
  eds.), Cambridge Univ. Press, 2012.

\bibitem{Minka:Diss:01}
T.~Minka, {\em A Family of Approximate Algorithms for {B}ayesian Inference}.
\newblock PhD thesis, Dept. Comp. Sci. Eng., MIT, Cambridge, MA, Jan. 2001.

\bibitem{Heskes:JSM:05}
T.~Heskes, M.~Opper, W.~Wiegerinck, O.~Winther, and O.~Zoeter, ``Approximate
  inference techniques with expectation constraints,'' {\em Journal of
  Statistical Mechanics: Theory and Experiment}, vol.~P11015, 2005.

\bibitem{Meng:SPL:15}
X.~Meng, S.~Wu, L.~Kuang, and J.~Lu, ``An expectation propagation perspective
  on approximate message passing,'' {\em IEEE Signal Processing Letters},
  vol.~22, no.~8, pp.~1194--1197, 2015.

\bibitem{Chambolle:TIP:98}
A.~Chambolle, R.~A. DeVore, N.~Lee, and B.~J. Lucier, ``Nonlinear wavelet image
  processing: {V}ariational problems, compression, and noise removal through
  wavelet shrinkage,'' {\em IEEE Trans. on Image Processing}, vol.~7,
  pp.~319--335, Mar. 1998.

\end{thebibliography}

\end{document}